\begin{document}
\title{On Robustness in Multilayer Interdependent Network}
\date{\vspace{-5ex}}
\author{\vspace{-5ex}}
\institute{\vspace{-5ex}}
\author{Joydeep Banerjee, Chenyang Zhou, and Arunabha Sen}
\institute{School of Computing, Informatics and Decision System Engineering\\
\small Arizona State University, Tempe, Arizona 85287 \\
\small Email: \{joydeep.banerjee, czhou24, arun.das, asen\}@asu.edu}

\maketitle 
\begin{abstract}
Critical Infrastructures like power and communication networks are highly interdependent on each other for their full functionality. Many significant research have been pursued to model the interdependency and failure analysis of these interdependent networks. However most of these models fail to capture the complex interdependencies that might actually exist between the infrastructures. The \emph{Implicative Interdependency Model} that utilizes Boolean Logic to capture complex interdependencies was recently proposed which overcome the limitations  of the existing models. A number of problems were studies based on this model. In this paper we study the \textit{Robustness} problem in Interdependent Power and Communication Network. The robustness is defined with respect to two parameters $K \in I^{+} \cup \{0\}$ and $\rho \in (0,1]$. We utilized the \emph{Implicative Interdependency Model} model to capture the complex interdependency between the two networks. The model classifies the interdependency relations into four cases. Computational complexity of the problem is analyzed for each of these cases. A polynomial time algorithm is designed for the first case that outputs the optimal solution. All the other cases are proved to be NP-complete. An in-approximability bound is provided for the third case. For the general case we formulate an Integer Linear Program to get the optimal solution and a polynomial time heuristic. The applicability of the heuristic is evaluated using power and communication network data of Maricopa County, Arizona. The experimental results showed that the heuristic almost always produced near optimal value of parameter $K$ for $\rho < 0.42$. 
\vspace{-10pt}
\end{abstract}

\begin{keywords}
Implicative Interdependency Model, Interdependent Networks, Robustness.
\vspace{-15pt}
\end{keywords}

\section{Introduction}
Critical infrastructures (or networks) of a nation are heavily interdependent on each other for their full functionality. Two such infrastructures that engage in a heavy symbiotic dependency are the power and communication networks. For example, consider entities in the power network such as SCADA systems. The primary function of a SCADA system is to control power generation units remotely. This operation relies on the entities of the communication network for receiving control commands. On the other hand communication network entities are dependent on electric power to function properly. The power and communication networks are highly vulnerable to nature induced and man made (terrorist attack) failure. Considering a set of entities in either network failing initially, further failures may be triggered due to the interdependencies between them. 

For analysis of these infrastructures it is imperative to model their interdependencies. A number of such models have been proposed \cite{Bul10}, \cite{Gao11}, \cite{Sha11}, \cite{Ros08}, \cite{Zha05}, \cite{Par13}, \cite{Ngu13}, \cite{Zus11}. However, most of these models fail to account for the complex interdependencies between the networks \cite{Banerjee2014Survey}. In \cite{sen2014identification} the authors described the need to address complex interdependencies as the one in the following example. Consider an entity $a_i$ in power network and entities $b_j,b_k,b_l,b_m,b_n$ in the communication network. For the entity $a_i$ to be operational (i) entities $b_j$  {\em and} $b_k$ needs to be operational, {\em or} (ii) entities $b_k$  {\em and} $b_m$ needs to be operational, {\em or} (iii) entity $b_n$ needs to be operational. The constructed interdependency cannot be represented by graph based interdependency models as described in \cite{Sha11}, \cite{Ros08}, \cite{Zha05}, \cite{Cas13}, \cite{Par13}, \cite{Ngu13}, \cite{Bul10}, \cite{Gao11}. Graph based models fall short in capturing the disjunctive and conjunctive dependencies that exists in real world systems. Authors in \cite{sen2014identification} proposed an \emph{Implicative Interdependency Model} that overcome these limitations. The model uses Boolean Logic to characterize the complex interdependencies. This model was used to study a number of problems on interdependent critical systems \cite{sen2014identification}, \cite{das2014root}, \cite{mazumder2014progressive} and \cite{BanHardening15}.

We use the \emph{Implicative Interdependency Model} (IIM) to study the ``\emph{Robustness}'' problem in interdependent power and communication network. An \textit{Interdependent Network (IDN)} is formally denoted as $\mathcal{I}(A,B,\mathcal{F}(A,B))$. Here $A$ and $B$ are set of entities in power and communication network respectively and function $\mathcal{F}(A,B)$ capturing the interdependency between them through the IIM model (discussed later). ``\emph{Robustness}'' of an interdependent system can be formulated with respect to two parameters $K \in I^{+} \cup \{0\}$ and $\rho \in \mathbb{R}$ with $ 0< \rho \le 1$.  An interdependent system is $(K,\rho)$-robust if a minimum of $K+1$ entities need to fail for failure of at least $\rho (|A|+|B|)$ entities. This robustness value can be treated as a metric to determine the quality of an IDN when it is set up initially. In existing systems this value determines the need and importance of introducing additional measures for improving their robustness. 

The interdependencies using the IIM model can be categorized into four cases (namely case I, II, III and IV). The robustness problem is separately studied for each individual cases. We show that for case I there exists a polynomial time solution to the problem whereas all the other cases are NP-complete. Additionally, we provide an in-approximability bound for case III, an optimal solution using integer linear program for case IV (the general case) and a heuristic for the same. The heuristic is compared with the optimal solution using power and communication network data of Maricopa County, Arizona. From the experimental results we infer that almost always our heuristic produces near optimal solution for $\rho < 0.42$.

The rest of the paper is organized as follows. In Section \ref{IIM_Section} the IIM model is presented, formal definition of the robustness problem and analysis of its computational complexity are done in Section \ref{ProbForm} and \ref{CompAna} respectively, the heuristic and optimal solution to the problem is provided in Section \ref{Solutions} with the corresponding experimental results in Section \ref{ExpRes}, and we conclude the paper is Section \ref{Conc}. 
\vspace{-10pt}

\section{The Implicative Interdependency Model}
\label{IIM_Section}
In this section we describe the IIM model \cite{sen2014identification}. Consider an IDN $\mathcal{I}(A,B,\mathcal{F}(A,B))$. The set $A$ and $B$ consisits of entities $\{a_1,a_2,a_3\}$ and $\{b_1,b_2,b_3,b_4\}$ respectively. The function  $\mathcal{F}(A,B)$ giving the set of dependency equations are represented in Table \ref{tbl:example1idr}. We call the dependency equation of each entity as \textit{Implicative Dependency Relation (IDR)}.  In the given example, an IDR $a_1 \leftarrow b_3 + b_1 b_4$ implies that entity $a_1$ is operational if entity $b_3$ \emph{or} entity $b_1$ \emph{and} $b_4$ are operational. In the IDR each conjunction term e.g. $a_2 a_3$ is referred to as \emph{minterms}.
 
\vspace{-15pt}
\begin{table}[ht!]
\begin{center}
\begin{tabular}{|l||l|}  \hline
{\bf Power Network} & {\bf Comm. Network} \\ \hline
$a_1\leftarrow b_2 + b_4$ & $b_1 \leftarrow a_1 + a_2$ \\ \hline
$a_2 \leftarrow  b_1 b_3$ & $b_2 \leftarrow a_1a_2a_3$ \\ \hline
$a_3 \leftarrow b_3 + b_1 b_4$ & $b_3 \leftarrow a_1 + a_2 a_3$ \\ \hline
$--$ & $b_4 \leftarrow a_2$ \\ \hline
\end{tabular}
\vspace{0.08in}
\caption{Sample Implicative Interdependency Relations of an IDN}
\protect\label{tbl:example1idr}
\end{center}
\vspace{-35pt}
\end{table}

Initial failure of an entity set in $A \cup B$ would cause the failure to cascade until a steady state is reached. The event of an entity failing after the initial failure is termed as \textit{induced failure}. The cascade is assumed to occur in time steps of unit length. Each time step captures the effect of entities killed in all previous time steps. We demonstrate the cascading failure for the interdependent network outlined in Table \ref{tbl:example1idr} through an example. Consider that the entity $a_1$ fails at time step $t=0$. Table \ref{tbl:example1cascade} represents the cascade of failure in each subsequent time steps. In Table \ref{tbl:example1cascade}, for a given entity and time step $'0'$ represents the entity is operational and $'1'$ non operational. In this example a steady state is reached at time step $t=4$ when all entities are non operational. The IIM model also assumes that the dependent entities of all failed entities are killed immediately at the next time step. For example at time step $t=1$ entities $a_1$, $b_2$ and $b_4$ are non operational. Due to the IDR $a_1 \leftarrow b_2+b_4$ entity $a_1$ is killed immediately at time step $t=2$. 

\vspace{-20pt}
\begin{table}[ht!]
\begin{center}
\begin{tabular}{|c|c|c|c|c|c|c|c|}  \hline
\multicolumn{1}{|c|}{Entities} & \multicolumn{7}{c|}{Time Steps ($t$)}\\
\cline{2-8} & $0$ & $1$ & $2$ & $3$ & $4$ & $5$ & $6$ \\\hline \hline
$a_1$ & $0$ & $0$ & $1$ & $1$ & $1$ & $1$ & $1$ \\ \hline
$a_2$ & $1$ & $1$ & $1$ & $1$ & $1$ & $1$ & $1$ \\ \hline
$a_3$ & $0$ & $0$ & $0$ & $0$ & $1$ & $1$ & $1$ \\ \hline
$b_1$ & $0$ & $0$ & $0$ & $1$ & $1$ & $1$ & $1$ \\ \hline
$b_2$ & $0$ & $1$ & $1$ & $1$ & $1$ & $1$ & $1$ \\ \hline
$b_3$ & $0$ & $0$ & $0$ & $1$ & $1$ & $1$ & $1$ \\ \hline
$b_4$ & $0$ & $1$ & $1$ & $1$ & $1$ & $1$ & $1$ \\ \hline
\end{tabular}
\vspace{0.08in}
\caption{Failure cascade propagation when entity $\{a_2\}$ fail at time step $t=0$. A value of $1$ denotes entity failure, and $0$ otherwise}
\protect\label{tbl:example1cascade}
\end{center}
\vspace{-40pt}
\end{table}

The main challenge of the IIM model is accurate formulation of the IDRs. Two possible ways of doing this would be 1) careful analysis of the underlying infrastructures as in \cite{Zus11}, 2) Consultation with domain experts. However we only utilize the IIM model to analyze the Robustness problem and refrain from addressing the mentioned challenge.

\section{Problem Formulation}
\label{ProbForm}
As described before we define the Robustness in Interdependent Network (RIDN) problem with respect to an integer $K \in I^{+} \cup \{0\}$ and a real valued parameter $\rho \in \mathbb{R}$ with $ 0< \rho \le 1$. An IDN $\mathcal{I}(A,B,\mathcal{F}(A,B))$ is $(K,\rho)$ robust if a minimum of $K+1$ entities need to fail initially for a final failure of at least $\rho (|A|+|B|)$ entities. For example, the IDN described in Table \ref{tbl:example1idr} is (0,$\rho$) robust for any value of $\rho \in (0,1]$. This is because initial failure of entity $a_2$ causes all the entities to fail in the steady state. The output of the RIDN problem is the parameter $K$ for a given IDN and a value of $\rho$. We formally state the decision version of the RIDN problem as follows ---.\\ \\
\textbf{The Robustness in Interdependent Network (RIDN) problem}\\ 
\textbf{Instance---} \textit{An IDN $\mathcal{I}(A,B,\mathcal{F}(A,B))$, an integer $K \in I^{+}$ and a real valued parameter $\rho \in \mathbb{R}$ with $ 0< \rho \le 1$.} \\  
\textbf{Decision Version---} \textit{Does there exist a set of entities $S_{I} \subseteq A \cup B$ and $|S_{I}| \le K$ which when failed initially causes a final failure of at least $\rho (|A|+|B|)$ entities.} \\ \\
It is to be noted that a solution to the decision version of RIDN problem would ensure that the IDN is \textbf{not} $(K,\rho)$ robust. We use this notion in our computational complexity proofs.

\section{Computational Complexity Analysis}
\label{CompAna}
The IDRs in the IIM model can be represented in four different forms (1) one minterm of size one, 2) one minterm of arbitrary size, 3) arbitrary number of minterms of size one, and 4) arbitrary number of minterms of arbitrary size (general case). For each of the forms we separately analyze the computational complexity of the RIDN problem. 

\subsection{Case I: Problem Instance with One Minterm of Size One}
The IDRs in the set $\mathcal{F}(A,B)$ have minterms of size 1. With two entities $x_{i}$ and $y_{j}$ of network $A(B)$  and $B(A)$ respectively the IDR $x_{i} \leftarrow y_{j}$ represents this case. Additionally any entity can appear at most once on the left side of the IDR. We provide a polynomial time algorithm (Algorithm \ref{alg:alg1}) and prove its optimality ( Theorem \ref{th:thm1}) that solves the RIDN problem for Case I. 

To develop the algorithm we use the notion of \textit{Kill Set} of an entity $x_i \in A \cup B$ as in \cite{sen2014identification}. \textit{Kill Set} $C_{x_i}$ of an entity $x_i$ are the set of entities failed due to initial failure of $x_i$ alone. Using the concept of \textit{Kill Set} Algorithm \ref{alg:alg1} is developed. 

\begin{algorithm}
\small
	\KwData{An interdependent network $\mathcal{I}(A,B,\mathcal{F}(A,B))$ and a real valued parameter $\rho \in (0,1]$. 
		}
	\KwResult{A set of entities $E$ in $\mathcal{I}(A,B,\mathcal{F}(A,B))$.
		}
	\Begin{			
		  For each entity $x_i \in (A \cup B)$ compute the set of kill sets $\mathcal{C}=\{C_{x_1}, C_{x_2}, ..., C_{x_{|A|+|B|}}\}$, where $C_{x_i} = KillSet(x_i)$ \;
		Initialize $\mathcal{D}=\emptyset$\ and $E=\emptyset$ \;
			\While  {$|\mathcal{D}| < \rho (|A|+|B|$)}{
				Let $x_j$ be the entity having highest $|C_{x_j}|$, in case of a tie choose arbitrarily \;
				Add $x_j$ to set $E$ \;
				Update $\mathcal{D}=\mathcal{D} \cup C_{x_j}$\;
     				\For  {($i=1$; $i \le |A|+|B|$; $i++$)}{
					$C_{x_i}=C_{x_i} \backslash C_{x_j}$ \;
				}
			}
			\Return{$E$} \;
	}		
\caption{RIDN Algorithm for IDNs with Case I type interdependencies}
\label{alg:alg1}
\end{algorithm}

\vspace{-5pt}

\begin{theorem}
\label{th:thm1}
Algorithm \ref{alg:alg1} solves the RIDN problem for Case I optimally in polynomial time.
\vspace{-5pt}
\end{theorem}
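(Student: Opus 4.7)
The plan is to exploit the very restricted structure of Case~I. Since each entity appears on the left of at most one IDR and every IDR has a single minterm of size one, every entity has at most one predecessor in the dependency digraph (the graph with an edge $y_j \to x_i$ whenever $x_i \leftarrow y_j$). A digraph with maximum in-degree one decomposes into weakly connected components, each of which is either an out-tree rooted at its unique predecessor-less entity, or a ``rho-shape'' consisting of a directed cycle with out-trees hanging off the cycle nodes. Call an entity a \emph{root} of its component $D$ if it has no predecessor or lies on the cycle. Unrolling the cascade shows that for any root $r$ we have $C_r = D$, while for any non-root $x \in D$, $C_x$ is precisely the set of out-tree descendants of $x$ and is a proper subset of $D$. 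Consequently the family $\{C_x\}_{x \in A \cup B}$ is \emph{laminar}: two kill sets in the same component are nested, and kill sets from different components are disjoint.

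With this structure in hand, optimality follows by a standard exchange argument. Let $S^{*}$ be any optimal solution to the decision version. If some $x \in S^{*}$ is a non-root lying in component $D$ with root $r$, then either $r \in S^{*}$, in which case $C_x \subseteq C_r$ makes $x$ redundant and contradicts minimality of $|S^{*}|$, or $r \notin S^{*}$, and replacing $x$ by $r$ keeps $|S^{*}|$ unchanged while only enlarging the covered set. I may therefore assume $S^{*}$ consists only of roots, at most one per component. The covered set then becomes the disjoint union of whole components, so the decision problem collapses to selecting the fewest components whose total size reaches $\rho(|A|+|B|)$; the solution is simply to take them in decreasing order of size. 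Algorithm~\ref{alg:alg1} implements precisely this: at each step, the entity of largest remaining kill-set size is necessarily the root of the largest component not yet used, because (i) within an untouched component only the roots realize $|C_x| = |D|$, and (ii) once a component is used, the subtraction step $C_{x_i} \leftarrow C_{x_i} \setminus C_{x_j}$ zeroes every kill set in that component. The loop halts as soon as the threshold is met, so the algorithm returns a minimum-cardinality set, matching the optimum.

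For the runtime, the kill sets can be computed by running a reachability search from each entity in the dependency digraph in $O((|A|+|B|)^2)$ total time; the main loop iterates at most $|A|+|B|$ times, each iteration costing $O((|A|+|B|)^2)$ for the maximum-selection and subtraction steps, giving an overall polynomial bound. The subtle point I expect to require some care is the case in which a component contains a directed cycle, since then several ``roots'' coexist and a sloppy exchange argument could loop; this is absorbed cleanly by the preliminary observation that all cycle-roots share the same kill set $D$, which I would state as a short lemma before the main exchange step.
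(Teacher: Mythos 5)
Your proof is correct, and it is more self-contained than the one in the paper. Both arguments ultimately rest on the same two pillars --- the laminar structure of the kill sets in Case~I and a greedy exchange argument --- but the paper simply cites the laminarity property ($C_{x_i}\cap C_{x_j}$ is empty or one of the two sets) from prior work and then disposes of optimality with a rather terse contradiction (``$x_n$ either had its kill set zeroed or was never of highest cardinality''). You instead derive the structure from first principles via the in-degree-at-most-one dependency digraph, classify components as out-trees or cycles with pendant out-trees, identify the roots as exactly the entities whose kill set is the whole component, and then run a clean replace-non-root-by-root exchange that reduces the problem to picking the fewest components meeting the threshold. This buys a genuinely rigorous optimality proof, including the observation (absent from the paper) that a non-root can never attain the global maximum because its own component's root strictly dominates it, so the arbitrary tie-breaking in the algorithm is harmless. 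Two small repairs: your gloss ``two kill sets in the same component are nested'' is not quite right --- kill sets of siblings in an out-tree are disjoint --- so state laminarity as ``nested or disjoint''; and your preliminary lemma about coexisting cycle-roots should also note that no cycle node is a descendant of a non-cycle node, which is what makes $C_x\subsetneq D$ strict for non-roots. Neither affects the argument.
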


\begin{proof}
Computation of \textit{Kill Sets} for all $A \cup B$ entities can be done in $O((|A|+|B|)^3)$ \cite{sen2014identification}. The while loop runs for maximum of $|A|+|B|$ times when $\rho=1$ and \textit{Kill Set} of each entity is only composed of the entity itself. The highest cardinality \textit{Kill Set} among all \textit{Kill Sets} can be found in $O(|A|+|B|)$. The for loop iterates for $|A|+|B|$ times with computation inside it taking $O(|A|+|B|)$ time per iteration. Hence, the time complexity of the while loop in total is $O((|A|+|B|)^3)$. So the overall time complexity of Algorithm \ref{alg:alg1} is $O((|A|+|B|)^3)$. 

We claim that Algorithm \ref{alg:alg1} returns the optimal value of robustness parameter $K=|E|-1$ of an IDN $\mathcal{I}(A,B,\mathcal{F}(A,B))$ with set $E$ containing the minimum number of entities that causes failure of at least $\rho (|A|+|B|)$ entities. The claim is proved by contradiction. Let $E_{OPT}$ be the optimal set that causes failure of at least $\rho (|A|+|B|)$ entities and $x_n$ be an entity in $E_{OPT}\backslash E$. It is proved in \cite{sen2014identification} that in Case I for any two entities $x_i$ and $x_j$, $C_{x_i} \cap C_{x_j} =\emptyset$ or $C_{x_i} \cap C_{x_j} = C_{x_i}$ or $C_{x_i} \cap C_{x_j} = C_{x_j}$ where $x_i \neq x_j$. At any iteration of the while loop the entity $x_j$ with highest cardinality \textit{Kill Set} is selected. Inside the for loop all entities having $C_{x_i} \cap C_{x_j} = C_{x_i}$ and the entity itself would have its \textit{Kill Set} updated to $\emptyset$. Hence the \textit{Kill Set} of the entity $x_n$ would either be set to $\emptyset$ at some iteration of the while loop or didn't have the highest cardinality at any iteration. Hence adding $x_n$ to optimal solution would have made no difference or reduce the number of failed entities. Hence a contradiction. So Algorithm \ref{alg:alg1} returns the optimal number of entities that causes failure of at least $\rho (|A|+|B|)$ entities.

\end{proof}

\subsection{Case II: Problem Instance with One Minterm of Arbitrary Size}
Case II is composed of IDRs having a single minterm of arbitrary size. A minterm of size $p$ with entities $x_{i}$ and $y_{j}$ belonging to network $A(B)$  and $B(A)$ respectively can be represented as $x_{i} \leftarrow \prod^{p}_{j=1} y_{j}$. Thus killing any one entity (or more) from the product term would kill $x_i$. In Theorem \ref{th:thm2} we prove that the decision version of the RIDN problem for Case II is NP complete. 

\begin{theorem}
\label{th:thm2}
The decision version of the RIDN problem for Case II is NP-complete.
\end{theorem}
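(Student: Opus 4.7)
The plan is to show membership in NP first and then reduce from \textsc{Set Cover} to establish hardness. Membership is straightforward: given an initial failure set $S_I \subseteq A \cup B$ with $|S_I| \le K$ as a certificate, the cascade can be simulated in polynomial time. At each time step, any entity $x_i$ with IDR $x_i \leftarrow \prod_{j=1}^{p} y_j$ is marked failed as soon as some $y_j$ is failed; since each step either introduces a new failure or the system has reached steady state, the simulation terminates within $|A|+|B|$ steps, each of polynomial cost, after which the total failure count can be compared to $\rho(|A|+|B|)$.

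For hardness I would reduce from \textsc{Set Cover}. The intuition is that a Case II IDR $x_i \leftarrow \prod y_j$ encodes exactly ``$x_i$ dies if \emph{any} $y_j$ dies,'' which mirrors the covering relation. Given a \textsc{Set Cover} instance with universe $U=\{u_1,\dots,u_m\}$, collection $\mathcal{S}=\{S_1,\dots,S_n\}$, and budget $k$, construct an IDN with $A=\{a_1,\dots,a_n\}$ (one per set), $B=\{b_1,\dots,b_m\}$ (one per element), and for every element $u_i$ the IDR
\[
 b_i \leftarrow \prod_{j\,:\,u_i \in S_j} a_j.
\]
The $A$-entities are left without IDRs, or, should the model insist that every entity carry an IDR, I would adjoin $n$ dummy source entities $b'_j$ with $a_j \leftarrow b'_j$ and rescale $\rho$ accordingly. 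Finally, set $K=k$ and $\rho=(k+m)/(n+m)$.

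Correctness would be argued in both directions. Forward: if $\{S_{j_1},\dots,S_{j_k}\}$ covers $U$, initially killing $\{a_{j_1},\dots,a_{j_k}\}$ immediately kills every $b_i$ (each $u_i$ lies in some chosen $S_{j_r}$), yielding $k+m=\rho(|A|+|B|)$ failures. Backward: suppose an initial set of size at most $k$ produces at least $k+m$ failures, and let $K'$ of these lie in $A$. Since $A$-entities have no IDRs, the final count of failed $A$-entities is exactly $K'$, and the count of failed $B$-entities is at most $m$, so the total is bounded by $K'+m$. Matching the required $k+m$ forces $K'=k$ and requires every $b_i$ to fail, which in turn forces the $k$ chosen sets $S_{j_1},\dots,S_{j_k}$ to collectively cover $U$.

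The main obstacle will be the backward direction: namely, ruling out that an adversary benefits from mixing $A$-kills with $B$-kills, and confirming that no reverse cascade from $B$ to $A$ circumvents the counting. The argument above handles this cleanly under the stated construction precisely because $A$-entities have no IDRs (or only depend on inert dummies), so $B$-kills contribute only themselves and can never dominate an $A$-kill in the covering calculation. Together with the trivial NP certificate, this establishes NP-completeness.
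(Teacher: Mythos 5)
Your proof is correct and takes essentially the same route as the paper: the paper reduces from \emph{Hitting Set} using IDRs $a_i \leftarrow b_m b_n b_p$ (a set-entity dies when any of its elements is killed) with $\rho = \frac{M+|\mathcal{S}|}{|S|+|\mathcal{S}|}$, while you reduce from \emph{Set Cover} via the dual construction $b_i \leftarrow \prod_{j : u_i \in S_j} a_j$ with $\rho = \frac{k+m}{n+m}$; since the two source problems are equivalent under hypergraph duality, the constructions and counting arguments are mirror images of one another. Your version additionally spells out NP membership and gives a slightly more careful backward-direction count ($K'+m \le k+m$ forcing equality), both of which the paper leaves implicit.
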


\begin{proof}
We prove the NP-completeness by giving a transformation from the \textit{Hitting Set Problem}. An instance of the hitting set problem consists of a set of elements $S$ and a set $\mathcal{S}=\{S_1,S_2,S_3,..,S_n\}$ where $S_{i} \subseteq S$, $\forall S_i \in \mathcal{S}$. The question asked in the problem is given an integer $M$ does there exist a set $S' \subseteq S$ with $|S'| \le M$ such that each subset in $\mathcal{S}$ contains at least one element from $S'$. From an instance of the hitting set problem we create an instance of the RIDN problem as follows. For each element $x_i \in S$ we add an entity $b_i \in B$. Similarly for each subset $S_i \in \mathcal{S}$ we add an entity $a_i \in A$. For each subset $S_i=\{x_m,x_n,x_p\}$ (say) we create an IDR $a_i \leftarrow b_m b_n b_p$. The value of $K$ is set to $M$ and $\rho$ is set to $\frac{M+|\mathcal{S}|}{|S|+|\mathcal{S}|}$. It is to be noted that there wont be any cascading failure due to absence of dependency relations of $B$ type entities.

Let there exists a solution to the hitting set problem. So each subset $S_i \in \mathcal{S}$ has at least one element from set $S'$ (with $|S'|=M$). Hence killing the corresponding $B$ type entities from the constructed instance would kill all $A$ type entities. Thus the fraction of entities killed is $\frac{M+|\mathcal{S}|}{|S|+|\mathcal{S}|}=\rho$ solving the RIDN problem. 

On the other way round let there exist a solution to the RIDN problem. It can be shown that the initial failure set would always be chosen from set $B$  to fail $\rho = \frac{M+|\mathcal{S}|}{|S|+|\mathcal{S}|}$ fraction of entities. This is because failure of any $A$ type entity cannot trigger failure of any other entity. Moreover the total number of entities in final failure set is $M+|\mathcal{S}|$ (as $|\mathcal{S}|=|A|$). Thus the failure set must contain all $A$ type entities except for $M$ other entities which has to be chosen from set $B$. So a solution to RIDN problem consisting of entities $B' \subseteq B$ would ensure that for each entity $a_i \in A$ at least one entity in its IDR is killed initially. So the set of elements in $S'$ corresponding to the entities in $B'$ would solve the hitting set problem. Hence proved
\end{proof}

\subsection{Case III: Problem Instance with an Arbitrary Number of Minterm of Size One}
Case III is composed of IDRs having arbitrary number of minterms of size 1. With entities $x_{i}$ and $y_{q}$ belonging to network $A(B)$  and $B(A)$ respectively this case can be represented as $x_{i} \leftarrow \sum^{p}_{q=1} y_{q}$. The given example has $p$ minterms each of size $1$. Thus to kill $x_i$ all entities in its IDR must be killed. In Theorem \ref{th:thm3} we prove that the decision version of the RIDN problem for Case III is NP complete. 

\begin{theorem}
\label{th:thm3}
The decision version of the RIDN problem for Case III is NP-complete.
\end{theorem}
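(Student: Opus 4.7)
\medskip
\noindent\textit{Proof plan.} The strategy is to establish NP-completeness following the style of Theorem~\ref{th:thm2}. Membership in NP is immediate: given a candidate initial failure set $S_I$, one simulates the IIM cascade in polynomial time (at most $|A|+|B|$ time steps, each re-evaluating IDRs in polynomial time) and checks that $|S_I|\le K$ and that at least $\rho(|A|+|B|)$ entities fail in the steady state. For hardness, I would reduce from the CLIQUE problem. CLIQUE feels like the natural counterpart to the Hitting Set reduction used for Case~II, because in Case~III an $A$-entity survives as long as any single one of the $B$-entities in its IDR survives, so forcing an $A$-entity to die amounts to removing an entire ``neighborhood,'' exactly the condition that characterizes a clique being killed in a vertex-indexed construction.

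Given a CLIQUE instance $(G=(V,E),k)$, I would build an IDN $\mathcal{I}(A,B,\mathcal{F}(A,B))$ by setting $B=\{b_v:v\in V\}$, $A=\{a_e:e\in E\}$, and introducing for every edge $e=(u,v)\in E$ the IDR $a_e\leftarrow b_u+b_v$. Each IDR consists of two size-one minterms, so the instance lies in Case~III. The RIDN parameters would be $K=k$ and $\rho=(k+\binom{k}{2})/(|V|+|E|)$. The forward direction is routine: if $V'\subseteq V$ is a $k$-clique, the initial failure set $S_I=\{b_v:v\in V'\}$ has size $k$, and for every edge $e$ with both endpoints in $V'$ both minterms of $a_e$'s IDR are falsified, so $a_e$ is killed at the next time step; this adds $\binom{k}{2}$ induced failures for a total of $k+\binom{k}{2}=\rho(|A|+|B|)$ dead entities.

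The main obstacle will be the converse direction, where $S_I$ is a priori arbitrary and may mix $A$- and $B$-entities or fail to correspond to a clique. The key observation is that no $A$-entity appears on the right-hand side of any IDR in the construction, so killing an $A$-entity produces a single failure and triggers no cascade; hence I may assume without loss of generality that $S_I\subseteq B$. Writing $k'=|S_I|\le k$, the cascade contribution equals the number of edges of $G$ having both endpoints in $\{v:b_v\in S_I\}$, which is at most $\binom{k'}{2}\le\binom{k}{2}$. Reaching the total threshold $k+\binom{k}{2}$ therefore forces $k'=k$ and the induced subgraph on the chosen $k$ vertices to contain all $\binom{k}{2}$ possible edges, i.e., to be a $k$-clique in $G$. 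This gives the equivalence CLIQUE$\Leftrightarrow$RIDN-Case~III and completes the NP-completeness argument.
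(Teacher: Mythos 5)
Your proof is correct, but it reduces from a different source problem than the paper, which reduces from the \emph{Densest $p$-Subhypergraph} problem: each vertex becomes a $B$-entity, each hyperedge $e_j=\{v_m,v_n,v_q\}$ becomes an $A$-entity with IDR $a_j\leftarrow b_m+b_n+b_q$, and the threshold is $\rho=(p+M)/(|V|+|E|)$. Your CLIQUE reduction is exactly the $2$-uniform specialization of this construction with $M=\binom{k}{2}$ (CLIQUE is the special case of Densest $p$-Subhypergraph on ordinary graphs where all $\binom{p}{2}$ induced edges must be covered), so the two arguments are structurally the same gadget with different sources. What your version buys is a more standard, self-contained starting point and a sharper converse: because you demand \emph{all} $\binom{k}{2}$ edges, you must and do run the exact counting argument $k'+e(S)\le k'+\binom{k'}{2}\le k+\binom{k}{2}$ that forces equality, which is more rigorous than the paper's looser ``at least $M$ additional $A$-entities fail'' step. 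What the paper's choice buys is Theorem~\ref{inapxC2}: the $\frac{1}{2^{\log(n)^{\lambda}}}$ inapproximability bound is inherited directly from Densest $p$-Subhypergraph through the same reduction, whereas a CLIQUE reduction does not yield that corollary for free. Two small points to tighten in your write-up: the ``WLOG $S_I\subseteq B$'' step should be stated as a swap argument (replace any initially failed $A$-entity by an unused $B$-entity; this never decreases the number of final failures, and the degenerate case $B\subseteq S_I$ forces $k\ge|V|$, making the CLIQUE instance trivial), and you should note that if $|E|<\binom{k}{2}$ then $\rho$ may exceed $1$, in which case the CLIQUE instance is trivially a NO-instance and can be mapped to a fixed NO-instance of RIDN.
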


\begin{proof}
We prove that the problem is NP-complete by giving a reduction from the \textit{Densest $p$-Subhypergraph} problem \cite{hajiaghayi2006minimum}, a known NP- complete problem. An instance of the  Densest $p$-Subhypergraph problem includes a hypergraph $G=(V,E)$, a parameter $p$ and a parameter $M$. The problem asks the question whether there exists a set of vertices $|V'| \subseteq V$ and $|V'| \le p$ such that the subgraph induced with this set of vertices has at least $M$ completely covered hyperedges. From an instance of the  Densest $p$-Subhypergraph problem we create an instance of the RIDN problem as follows.  For each vertex $v_i \in V$ we add an entity $b_i \in B$. Similarly for each hyperedge $e_j \in E$ we add an entity $a_j \in A$. For each hyperedge $e_j$ with $e_j= \{v_m, v_n, v_q\}$ (say) an IDR of form $a_j \leftarrow b_m + b_n + b_q$ is created. The value of $K$ is set to $p$ and $\rho$ is set to $\frac{p+M}{|V|+|E|}$. It is to be noted that there wont be any cascading failure due to absence of dependency relations of $B$ type entities.


Let there exist a solution to the Densest $p$-Subhypergraph problem. Then there exist a set $V' \subseteq V$ and $|V'| = p$ that covers completely at least $M$ hypedges in $E$. Thus killing the $B$ type entities corresponding to the vertices in $V'$ would cause at least $M$ $A$ type entities to fail. Hence the fraction of entities killed is $\ge \frac{p+M}{|V|+|E|} =\rho$. So the solution of the Densest $p$-Subhypergraph problem solves the Robustness problem.


For the created instance of the RIDN problem all entities in set $B$ can only fail initially. The $A$ type entities can either fail initially or through induced failure of failing $B$ type entities. Hence initial failure of entities from set $B$ would have the most impact on final number of entities failed. Let us assume that there exists one or many solutions to the RIDN problem. Then at least one solution would have entities only from set $B$. For this solution the number of entities killed on initial failure of $p$ $B$ type entities is at least $p+M$. The additional $M$ entities killed belongs to set $A$. So the vertices in $V$ corresponding to the entities in $B$ would completely cover at least $M$ hyperedges. Thus the solution of RIDN problem solves the Densest $p$-Subhypergraph problem. Hence proved.

\end{proof}

\begin{theorem} {} \label{inapxC2} The RIDN problem for Case III is hard to approximate within a factor $\frac{1}{2^{log(n)^{\lambda}}}$ (where $n=|A \cup B|$) for some $\lambda >0$.
\end{theorem}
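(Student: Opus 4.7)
The plan is to recycle the reduction from the Densest $p$-Subhypergraph problem constructed in Theorem \ref{th:thm3} and argue that it is gap-preserving, then invoke the hardness of approximation result for Densest $p$-Subhypergraph established in \cite{hajiaghayi2006minimum}, which gives an inapproximability factor of $\frac{1}{2^{\log(n)^{\delta}}}$ for some $\delta > 0$. Concretely, given a hypergraph $G=(V,E)$ with parameters $p$ and $M$, we build the IDN exactly as before: one $B$-entity per vertex, one $A$-entity per hyperedge, and IDRs $a_j \leftarrow \sum_{v_m \in e_j} b_m$. We pose the optimization version of RIDN: given a budget $K$ on the initial failure set $S_I$, maximize the number of entities that eventually fail.

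Next I would establish the bijection between feasible solutions. Since $A$-entities have no IDRs associated with entities depending on them, no $A$-failure propagates, so an optimal initial failure set of size $K=p$ may be assumed to lie entirely inside $B$. If such a set $B' \subseteq B$ causes a total of $p+M$ failures, then exactly $M$ of the $A$-entities fail, and these correspond precisely to hyperedges completely covered by the vertex set $V'$ of $G$ associated with $B'$. Conversely, any set of $p$ vertices covering $M$ hyperedges translates, by the same correspondence, into an initial kill set of size $p$ triggering $p+M$ final failures. Thus the two optimization problems have values that differ by the additive constant $p$.

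The main obstacle, and the step I would concentrate on, is converting this additive correspondence into a multiplicative inapproximability of the claimed form. A direct $\alpha$-approximation for RIDN yields $\alpha(p+M^*)$ failed entities, which corresponds to only $\alpha M^* - (1-\alpha)p$ covered hyperedges in $G$, and this is useful only when $M^*$ dominates $p$. To handle this, I would restrict attention to hard instances of Densest $p$-Subhypergraph in the regime $M^* = \omega(p)$ (which is the natural parameter range in the Hajiaghayi et al. hardness result, and can be enforced by padding with extra disjoint hyperedges if needed, together with rescaling $\rho$ accordingly). In this regime the additive $p$ is swallowed by the approximation factor, and the bound $\frac{1}{2^{\log(n)^{\delta}}}$ transfers, up to a possibly smaller but still positive exponent $\lambda < \delta$, from Densest $p$-Subhypergraph to RIDN.

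Finally, I would check that $n = |A \cup B| = |V|+|E|$ is polynomially related to the hypergraph size parameter used in the hardness statement of \cite{hajiaghayi2006minimum}, so that an inapproximability factor of the form $2^{-(\log n)^{\lambda}}$ in the original problem yields one of the same qualitative form (with a smaller constant in the exponent if necessary) for RIDN, completing the proof. An approximation algorithm for RIDN violating this bound would, via the reduction, contradict the established inapproximability of Densest $p$-Subhypergraph.
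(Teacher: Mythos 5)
Your route is the same as the paper's: reuse the reduction from Densest $p$-Subhypergraph given in Theorem~\ref{th:thm3} and transfer the $\frac{1}{2^{\log(n)^{\lambda}}}$ inapproximability bound of \cite{hajiaghayi2006minimum}. The difference is one of rigor rather than of method. The paper's proof is two sentences long --- it asserts that Densest $p$-Subhypergraph ``is a special case'' of RIDN and stops --- whereas you correctly observe that the reduction is not literally approximation-preserving: the RIDN optimum on the constructed instance is $p+M^{*}$ while the hypergraph optimum is $M^{*}$, so an $\alpha$-approximation for RIDN yields only $\alpha M^{*}-(1-\alpha)p$ covered hyperedges, which is meaningless unless $M^{*}$ dominates $p$. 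That additive-offset issue is genuine and the paper silently ignores it. Your proposed repair (working in, or padding into, the regime $M^{*}=\omega(p)$ and absorbing the loss into a smaller exponent $\lambda$) is the right kind of move, but it is also the one step you have not actually carried out: you would need to confirm that the hard instances in \cite{hajiaghayi2006minimum} can be taken with $M^{*}=\omega(p)$, or that your padding with extra hyperedges does not destroy hardness, before the argument is complete. In short: same reduction as the paper, with your version making explicit (but not yet closing) the gap-preservation step that the paper's own proof elides.
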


\begin{proof}
In \cite{hajiaghayi2006minimum} it is proved the Densest $p$-Subhypergraph problem is hard to approximate within a factor of 
$\frac{1}{2^{log(n)^{\lambda}}}$ with $\lambda >0$. For IDRs of form Case III it is shown in Theorem \ref{th:thm3} that Densest $p$-Subhypergraph problem is a special case of the RIDN problem. Hence proved.
\end{proof}

\subsection{Case IV (General Case): Problem Instance with an Arbitrary Number of Minterms of Arbitrary Size}
Case IV is composed of IDRs having arbitrary number of minterms of arbitrary size. With entities $x_{i}$ and $y_{q}$ belonging to network $A(B)$  and $B(A)$ respectively this case can be represented as $x_{i} \leftarrow \sum^{p}_{j_1=1} \prod^{q_{j_1}}_{j_2=1}y_{j_2}$. The given example has $p$ minterms each of size $q_{j_1}$. In Theorem \ref{th:thm4} we prove that the decision version of the RIDN problem for Case IV is NP complete. 

\begin{theorem}
\label{th:thm4}
The decision version of the RIDN problem for Case IV is NP-complete.
\end{theorem}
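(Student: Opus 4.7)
The plan is the standard two-step NP-completeness argument: first verify membership in NP via polynomial-time cascade simulation, then transfer hardness from Case II (which Theorem \ref{th:thm2} already establishes as NP-complete) by observing that Case II is syntactically a restriction of Case IV.

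For membership in NP, I would use a candidate initial failure set $S_I \subseteq A \cup B$ with $|S_I| \le K$ as the certificate. Verification proceeds by simulating the cascading failure starting from $S_I$: at each time step, for every still-operational entity $x$, evaluate the Boolean IDR for $x$ on the currently failed set and kill $x$ if the IDR is falsified. Each round takes time polynomial in the encoding of $\mathcal{F}(A,B)$, and since the failed set grows monotonically in any non-terminal round, steady state is reached within $|A|+|B|$ rounds. At steady state one counts failed entities and accepts iff at least $\rho(|A|+|B|)$ of them have failed.

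For NP-hardness, the natural move is to avoid constructing a new reduction and simply inherit. Any IDR of Case II has the form $x_i \leftarrow \prod_{j=1}^{p} y_j$, which is syntactically also a Case IV IDR (``arbitrary number of minterms of arbitrary size'' with the number taken to be one). Hence the identity transformation maps any instance of the Case II RIDN problem to a Case IV instance with the same $K$, the same $\rho$, and identical cascade dynamics, so the yes/no answers agree. Together with NP membership, this yields NP-completeness for Case IV.

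The only subtlety worth double-checking is the polynomial bound on cascade rounds, which follows from monotonicity of the failed set; once that is in hand, essentially nothing in the argument is non-routine, and one could equally well inherit hardness from Case III by the same identity argument since Case III IDRs are also syntactically Case IV IDRs.
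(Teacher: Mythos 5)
Your proposal is correct and takes essentially the same route as the paper, which also establishes NP-completeness of Case IV by observing that Case II and Case III IDRs are special cases of the general form. Your explicit verification of membership in NP via polynomial-time cascade simulation (bounded by $|A|+|B|$ rounds by monotonicity of the failed set) is a routine step the paper leaves implicit, but it does not change the argument.
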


\begin{proof}
As IDRs in form of Case II and Case III are subsets of the general case so the RIDN problem for Case IV is NP-complete.
\end{proof}
\section{Solutions to the RIDN Problem}
\label{Solutions}
We propose an optimum solution to the RIDN problem using Integer Linear Programming (ILP) in \ref{sec1}, and a heuristic in section \ref{sec2}
\subsection{Optimal Solution for the RIDN problem}
\label{sec1}
We formulate an ILP that for a given parameter $\rho \in (0,1]$ and an IDN computes the minimum number of entities that need to fail initially for a final failure of $\rho(|A|+|B|)$ entities. Let $K'$ be the solution to the ILP. \textit{Then the IDN is $(K,\rho)$ robust with $K=K'-1$}. The ILP works with two variables $x_{id}$ and $y_{id}$ for each entity $x_i \in A$ and $y_i \in B$ respectively. The parameter $d$ in the variable denotes the time step. $x_{id}$ =1 (or $y_{id}=1$) if at time step $d$ the entity $x_i$ ($y_i$) is in a failed state and $0$ otherwise. With these definitions the objective function can be formulated as follows:
\begin{equation}\label{eqn:ilpobj}
min\overset{m}{\underset{i=1}{\sum}}x_{i0}+\overset{n}{\underset{j=1}{\sum}}y_{j0}
\end{equation}

\noindent
In the above objective function $m$ and $n$ denote the size of the networks $A$ and $B$ respectively. The constraints of the ILP are formally described as follows:\\ 

\noindent
{\em Constraint Set 1:} $x_{id} \geq x_{i(d-1)}, \forall d, 1 \leq d \leq t_f$ and  $y_{id} \geq y_{i(d-1)}, \forall d, 1 \leq d \leq t_f$, where $t_f$ denotes the final time step. The constraint satisfies the property that if the entity $x_i$ fails at time step $d$ it should remain to be in the failed state for all subsequent time steps \cite{sen2014identification}.\\

\noindent
{\em Constraint Set 2}: A brief overview of the constraint set to model the failure propagation through cascades is presented here. Consider an IDR of form ${a_i} \leftarrow {b_j}+{b_k}{b_l} +{b_m}{b_n} {b_q}$. This corresponds to the general case or Case IV as discussed earlier. The constraints created to capture the failure propagation are described in the following steps ---

\vspace{0.05in}
\noindent
{\em Step 1:} We introduce new variables to represent minterms of size greater than one. In this example two new variables $c_1$ and $c_2$ are introduced to represent the minterms ${b_k}{b_l}$ and ${b_m}{b_n} {b_q}$ respectively. This is equivalent of adding two new IDRs $c_1 \leftarrow {b_k}{b_l}$ and $c_2 \leftarrow {b_m}{b_n} {b_q}$ along with the transformed IDR ${a_i} \leftarrow b_j+c_1+c_2$. 

\vspace{0.05in}
\noindent
{\em Step 2:} For each IDR corresponding to the $c$ type variables and untransformed IDRs of form Case II we introduce a linear constraint to capture the failure propagation. For an IDR $c_2 \leftarrow {b_m}{b_n} {b_q}$ the constraint is represented as $c_{2d} \leq y_{m(d-1)} + y_{n(d-1)} + y_{q(d -1)}, \forall d, 1 \leq d \leq t_f$.

\vspace{0.05in}
\noindent
{\em Step 3:} Similarly,  for each transformed IDR and untransformed IDRs of form Case III we introduce a linear constraint to capture the failure propagation. For an IDR ${a_i} \leftarrow b_j + c_1 + c_2$ the constraint is represented as $N \times x_{id} \leq y_{j(d-1)}+ c_{1(d-1)} + c_{2(d-1)}, \forall d, 1 \leq d \leq t_f$. Here $N$ is the number of minterms in the IDR (in this example $N=3$). \\

\noindent
{\em Constraint Set 3}: We must also ensure that at time step $t_f$ at least $\rho (|A|+|B|)$ entities fail. This can be captured by introducing the constraint $\overset{m}{\underset{i=1}{\sum}}x_{i(t_f)}+\overset{n}{\underset{j=1}{\sum}}y_{j(t_f)} \ge \rho(|A|+|B|)$.\\ 
\noindent

So with the objective in (\ref{eqn:ilpobj}) and set of constraints the ILP finds the  minimum number of entities $K'$ which when failed initially causes at least $\rho (|A|+|B|)$ entities to fail at $t_f$. 

\vspace{-10pt}

\subsection{Heuristic Solution for the RIDN problem}
\label{sec2}
A heuristic solution for the RIDN problem is provided in this subsection. Along with the definition of \textit{Kill Set}, we introduce the notion of \textit{Total Minterm Hit Set} of an entity to design the heuristic. Before formal definition of \textit{Total Minterm Hit Set} we first define \textit{Minterm Hit Set} of an entity as follows ---

\begin{definition}
The Minterm Hit Set for an entity $x_j \in A \cup B$ in an interdependent network $\mathcal{I}(A,B,\mathcal{F}(A,B))$ is denoted as $MHS(x_j)$. $MHS(x_j)$ contains the set of all minterms that has the entity $x_j$ in it.
\end{definition} 

\begin{definition}
The Total Minterm Hit Set for an entity $x_j \in A \cup B$ is denoted as $TMHS(x_j)$. It is defined as union of Minterm Hit Set of all entities in $C_{x_j}$ (\textit{Kill Set} of $x_j$).
\end{definition} 

\begin{algorithm}[ht!]
\small
	\KwData{An interdependent network $\mathcal{I}(A,B,\mathcal{F}(A,B))$ and a real valued parameter $\rho \in (0,1]$. 
		}
	\KwResult{An integer $|K_H|-1$ where $K_{H}$ is a set of entities that when killed initially fails at least $\rho (|A|+|B|)$ entities
		}
	\Begin{			
		Initialize $\mathcal{D}=\emptyset$\ and $K_H=\emptyset$ \;
			\While  {$|\mathcal{D}| < \rho (|A|+|B|$)}{
				 For each entity $x_i \in (A \cup B)\backslash D$ compute the kill set $C_{x_i}$ \;
				 For each entity $x_i \in (A \cup B)\backslash D$ compute $TMHS(x_i)$\;
				Let $x_j$ be the entity having highest $|C_{x_j}|$ \;
				\If{There exists multiple entities having highest cardinality Kill Set}{
					Let $x_p$ be an entity having highest $TMHS(x_p)$ with $x_p$ in the set of entities having highest cardinality Kill Set\;
					If there is a tie choose arbitrarily\;
					Add $x_p$ to set $K_H$ \;
					Update $\mathcal{D}=\mathcal{D} \cup C_{x_p}$\;
     				           Update $\mathcal{F}(A,B)$ by removing all IDRs corresponding to entities in $C_{x_p}$ and all minterms in $TMHS(x_p)$\;
				}
				\Else{
				Add $x_j$ to set $K_H$ \;
				Update $\mathcal{D}=\mathcal{D} \cup C_{x_j}$\;
     				Update $\mathcal{F}(A,B)$ by removing all IDRs corresponding to entities in $C_{x_j}$ and all minterms in $TMHS(x_j)$\;		
				}
			}
			\Return{$|K_H|-1$} \;
	}		
\caption{RIDN Algorithm for IDNs with Case I type interdependencies}
\label{alg:alg2}
\end{algorithm} 

Using these definitions a heuristic is formulated in Algorithm \ref{alg:alg2}. For each iteration of the while loop in the algorithm, the operational entity having highest cardinality Kill Set is selected. This ensures that at each step the number of entities failed is maximized. In case of a tie, the entity having highest cardinality Total Minterm Hit Set among the set of tied entities is selected. This causes the selection of the entity that has the potential to kill maximum number of entities in the subsequent steps. Thus, the heuristic greedily minimizes the set of entities which when killed initially fails at least $\rho$ fraction of total entities in the IDN. The heuristic overestimates the parameter $K$ while determining the robustness $(K,\rho)$ of an IDN. The value of the parameter $K$ is equal to $|K_H|-1$ which is the output of Algorithm \ref{alg:alg2}.  Algorithm \ref{alg:alg2} runs in polynomial time, more specifically the run time is $\rho n (n+m)^2$ (where $n=|A|+|B|$ and $m=$ Number of minterms in $\mathcal{F}(A,B)$).

\section{Experimental Results}
\label{ExpRes}
\vspace{-8pt}
We performed experimental comparison between the heuristic and the optimal solution of the RIDN problem. Real world data sets were used for the experiments. The communication network data was obtained from GeoTel (www.geo-tel.com). The dataset contains $2,690$ cell towers, $7,100$ fiber-lit buildings and $42,723$ fiber links of Maricopa County, Arizona, USA. The power network data was obtained from Platts (www.platts.com). It contains $70$ power plants and $470$ transmission lines of the same county. We took four non overlapping regions of the Maricopa county. It is to be noted that the union of the regions does not cover the entire space. The entities of the power and communication network for these four regions were extracted. As per notation, set $A$ and $B$ contain entities of the power and communication network respectively. The number of entities in set $A$ and $B$ are 29 and 19 for Region 1, 29 and 20 for Region 2, 29 and 19 for Region 3, and 33 and 20 for Region 4. The regions were represented by an interdependent network $\mathcal{I}(A,B,\mathcal{F}(A,B))$. For these regions $\mathcal{F}(A,B)$ was generated using the IDR construction rule as defined in \cite{sen2014identification}.

\begin{figure*}[!htb]
	\centering
	\begin{center}
	\subfloat[][Region 1]{\includegraphics[width=5.5cm]{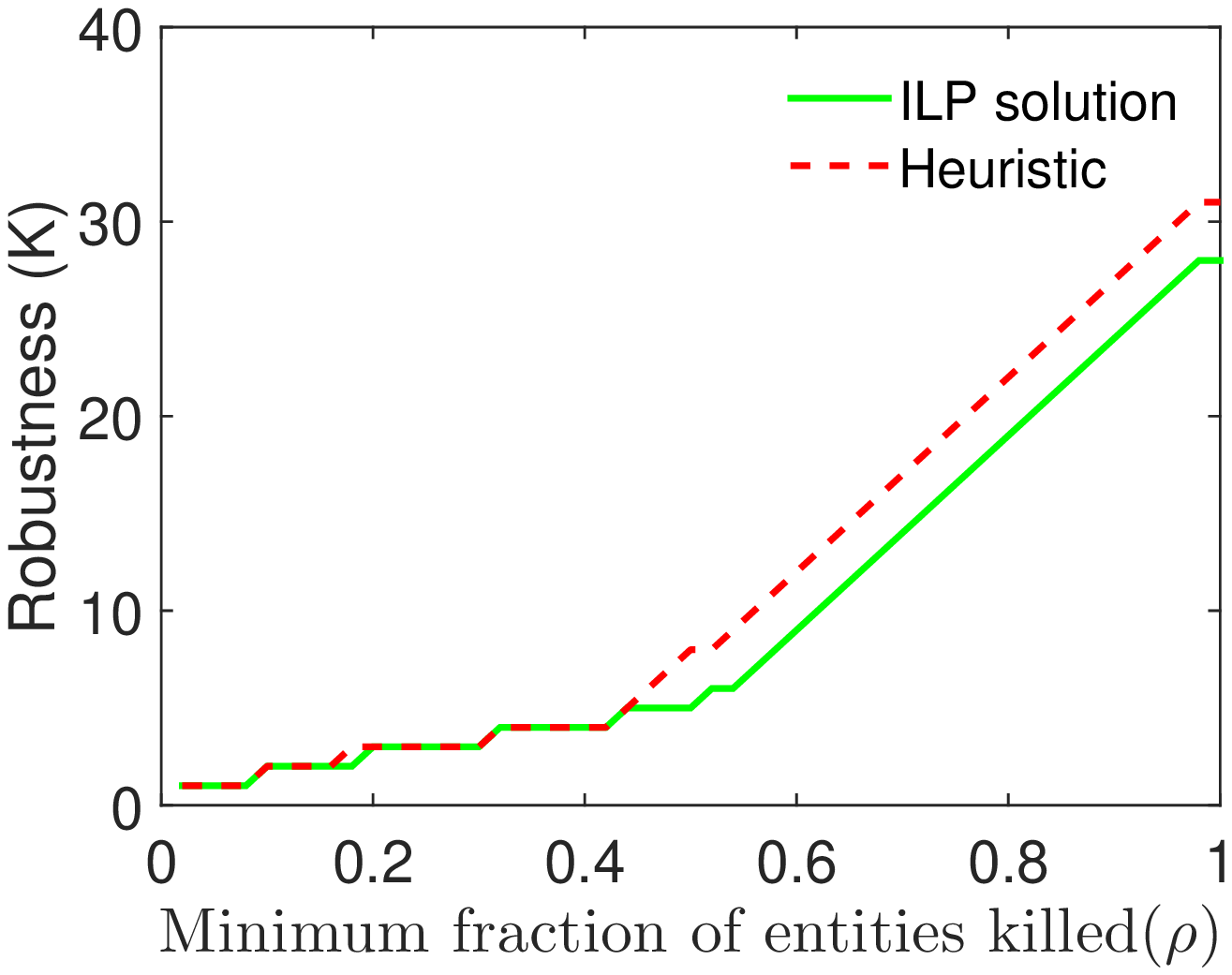}\label{fig:plot1}}
           \subfloat[][Region 2]{\includegraphics[width=5.5cm]{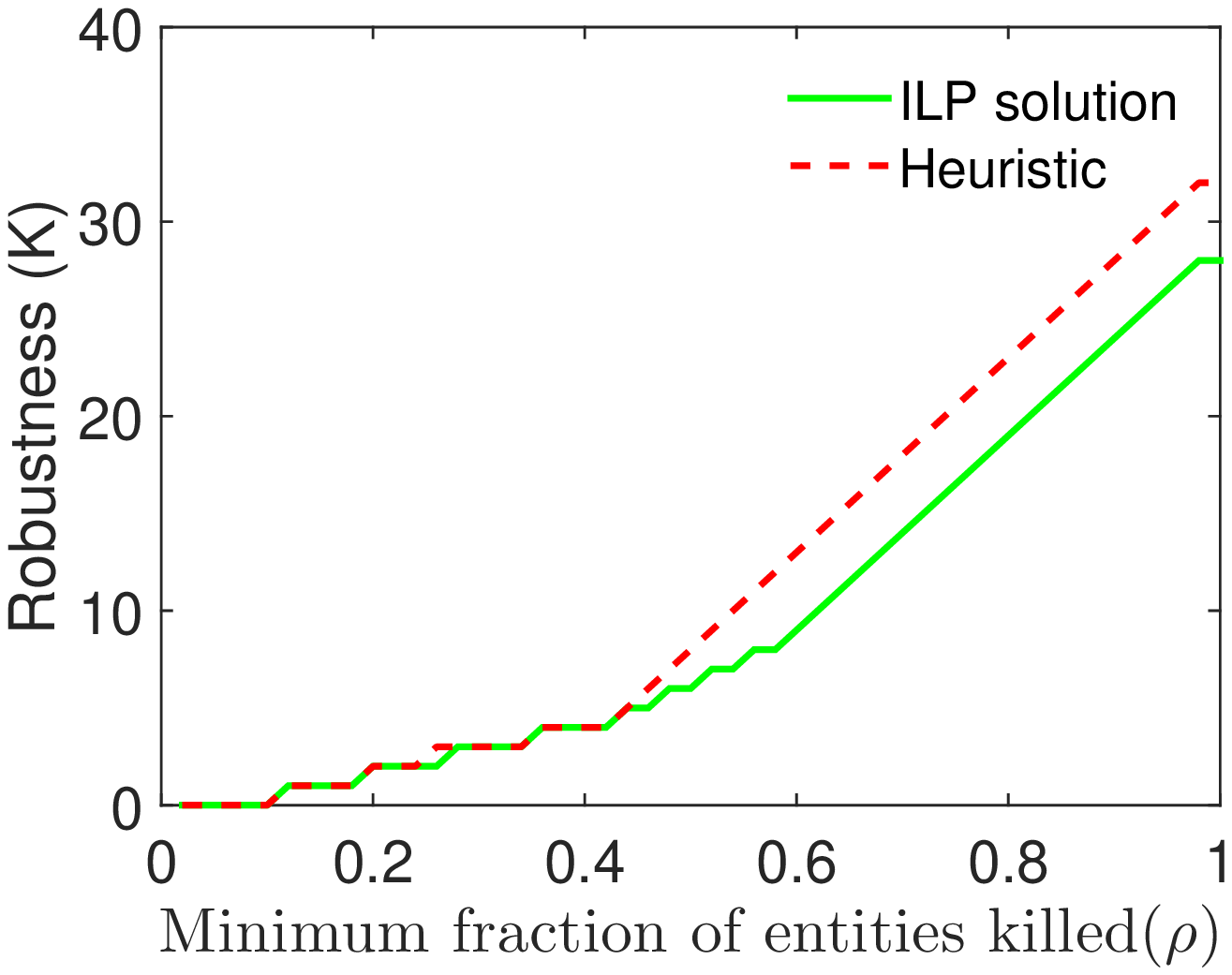}\label{fig:plot2}}\\
           \subfloat[][Region 3]{\includegraphics[width=5.5cm]{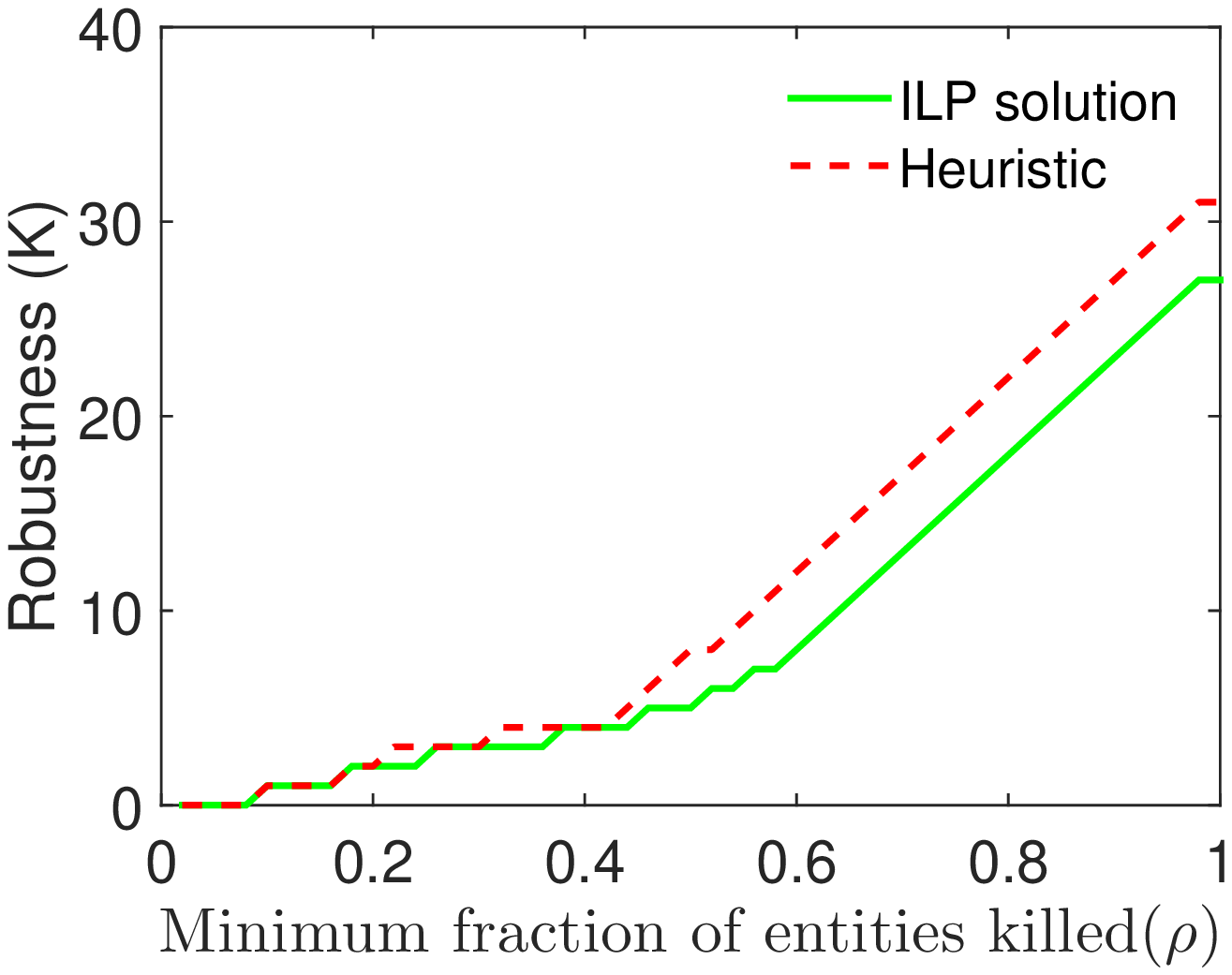}\label{fig:plot3}}
           \subfloat[][Region 4]{\includegraphics[width=5.5cm]{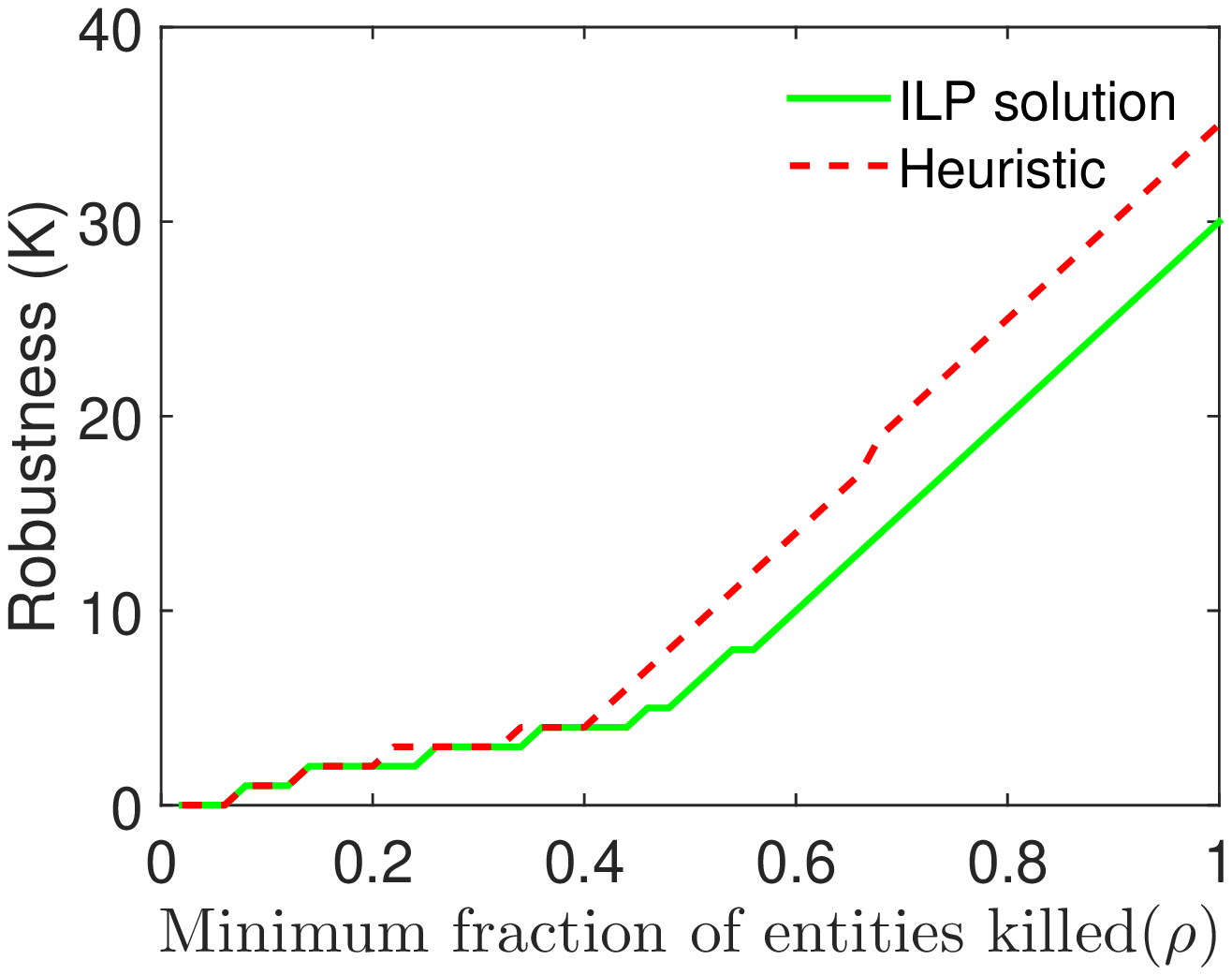}\label{fig:plot4}}
	\end{center}
	\caption{Robustness parameter $K$ returned by the optimal solution and the heuristic by varying parameter $\rho$ for four regions in Maricopa County, Arizona, USA}\label{fig:plot}
\vspace{-10pt}
\end{figure*}

IBM CPLEX Optimizer 12.5 is used to get the optimal solution using the ILP. The simulation for the heuristic was done in Python 3. In a given region the minimum fraction of entities killed ($\rho$) was varied from $0.02$ to $1$ in steps of $0.02$. For each value of $\rho$ the robustness parameter ($K$) was obtained from the optimal solution and the heuristic. Figures \ref{fig:plot1} to \ref{fig:plot4} shows the result obtained from the simulations for Region 1 to 4. It can be seen from the figures that the heuristic solution performs almost same as the ILP till $\rho=0.42$. For values of $\rho$ higher than $0.42$ there results in an overestimation of the robustness parameter $K$. The maximum overestimation is $3$ for Region 1, $4$ for Region 2 and 3 and $5$ for Region 4.

\section{Conclusion}
\label{Conc}
In this paper we propose the \textit{Robustness} problem in Multilayer Interdependent Network. \textit{Robustness} in an IDN is defined with respect to two parameters $K$ and $\rho$. The \textit{Implicative Interdependency Model} is utilized to model the interdependency. The IIM model segregates the interdependency relations into four different cases. Analysis of computational complexity of the \textit{Robustness} problem is done with respect to these four cases. For the general form of interdependency relation, the problem is found to be NP-complete. The optimal solution for the general case is obtained from an ILP. A heuristic is designed that returns an overestimated $K$ parameter value for a given $\rho$. Finally we compare the efficacy of the heuristic with the optimal solution using real data set of Maricopa County, Arizona. The heuristic produced optimal or near optimal solution for $\rho < 0.42$.

\vspace{-10pt}
\begin{footnotesize}
\bibliographystyle{splncs03}
\bibliography{references,referencesBibToAdd}
\end{footnotesize}
\end{document}